\DeclarePairedDelimiter{\abs}{\lvert}{\rvert}
\tikzset{v/.style={fill=white, draw=black, circle, inner sep=1pt, minimum size=9mm}}
\tikzset{e/.style={draw=black!40,->, >=stealth, line width=0.6mm}}
\tikzset{a/.style={draw,  ->, >=stealth, line width=0.6mm, color=green!60!lightgray}}
\tikzset{vwhite/.style={v, fill=white, draw, line width=0.3mm}}
\tikzset{n/.style={fill=green!50, draw=black, rectangle, inner sep=1pt, minimum size=9mm}}
\tikzset{font={\fontsize{11pt}{12}\selectfont}}
\newcommand{\start}{\textit{start}}
\newcommand{\target}{\textit{target}}
\newcommand{\ord}{\textit{ord}}
\newcommand{\length}{\textit{length}}
\newcommand{\cost}{\textit{cost}}
\newtheorem{prop}{Property}{\itshape}{\rmfamily}
\begin{document}
\title{A Flow Formulation for Horizontal Coordinate Assignment with Prescribed Width}
\titlerunning{Horizontal Coordinate Assignment with Prescribed Width}
%
\author{Michael J\"unger\inst{1} \and Petra Mutzel\inst{2}\orcidID{0000-0001-7621-971X} \and Christiane Spisla\inst{2}\textsuperscript{(\Letter )}}

\institute{University of Cologne, Cologne, Germany\\
\email{mjuenger@informatik.uni-koeln.de}
\and TU Dortmund University, Dortmund, Germany\\
\email{\{petra.mutzel,christiane.spisla\}@cs.tu-dortmund.de}
}
\authorrunning{J\"unger et al.}

%

\maketitle              
\begin{abstract}
We consider the coordinate assignment phase of the well known Sugiyama framework for drawing directed graphs in a hierarchical style. 
The extensive literature in this area has given comparatively little attention to a prescribed width of the drawing.
We present a minimum cost flow formulation that supports prescribed width and optionally other criteria like lower and upper bounds on the distance of neighboring nodes in a layer or enforced vertical edge segments. 
In our experiments we demonstrate that our approach can compete with state-of-the-art algorithms.

\keywords{Hierarchical Drawings  \and Coordinate Assignment \and Minimum Cost Flow \and Prescribed Drawing Width}
\end{abstract}
\section{Introduction}

The Sugiyama framework~\cite{Sugiyama} is a popular approach for drawing directed graphs. It layouts the graph in a hierarchical manner and works in five phases: Cycle removal, layer assignment, crossing minimization, coordinate assignment and edge routing.
If the graph is not already acyclic, some edges are reversed to prepare the graph for the next phase.
Then each node is assigned to a layer so that all edges point from top to bottom.
After that the orderings of the nodes within each layer are determined.
In the coordinate assignment phase that we consider here, the exact positions of the nodes are fixed.
Finally the edges are layouted, e.g., as straight lines. A good overview over the different phases of the framework can be found in \cite{HNHandbook}.

After the nodes are assigned to layers and the orderings of the nodes within their layers are fixed, the task of the coordinate assignment phase is to compute $x$-coordinates for all nodes. There are several, sometimes contradicting, objectives  in this phase, e.g., short edges, minimum distance between neighboring nodes, straight edges, balanced positions of the nodes between their neighbors in adjacent layers, and few bend points of edges that cross multiple layers.
The criterion ``short edges'' can be handled by
exact algorithms as well as fast heuristics that give pleasant results, possibly also considering other aesthetic criteria.

When it comes to the width of the drawing one usually tries to restrict the maximum number of nodes in one layer, see e.g. \cite{CG}. 
Long edges, i.e.\ edges that span more than two layers, are often split into paths with one dummy node on each intermediate layer.
Healy and Nikolov~\cite{HN} present a branch-and-cut approach to compute a layering that takes 
the influence of the number of dummy nodes on the width into account. 
Jabrayilov et al.~\cite{Jabrayilov} do the same in a mixed integer program that treats the first two phases of the Sugiyama framework simultaneously. 
But still, the maximum number of nodes in one layer does not necessarily define the actual width of the final drawing, as illustrated in Fig.~\ref{motivation}. The main objective of most methods for the coordinate assignment phase is ``short edges'', which often leads to small drawings, but the width of the final layout is not directly addressed.

\begin{figure}[tb]
\begin{minipage}{0.45\textwidth}
\centering
\resizebox{0.35\width}{!}{	
\begin{tikzpicture}
\tikzset{font={\fontsize{20pt}{12}\selectfont}}
\tikzset{v/.style={fill=white, draw=black, circle, inner sep=1pt, minimum size=5mm}}

	\node[v] at (0,4) (a) {};
	\node[v] at (0,2) (b) {};
	\node[v] at (1,2) (c) {};
	\node[v] at (0,0) (d) {};
	\node[v] at (1,0) (e) {};

	\node[v] at (0,-2) (f) {};
	\node[v] at (1,-2) (g) {};
	\node[v] at (1,-4) (h){};

	\draw[e, black] (a)--(b);
	\draw[e, black] (c)--(d);
	\draw[e, black] (g)--(h);
	
	\draw[e, black] (e) -- (f); 
	\node[fill=white, rectangle, minimum size=7mm] at (0.5,-1) (bullets){$\cdots$};
	
	\draw [decorate,decoration={brace,amplitude=10pt, mirror}]
(-0.5,4.2) -- (-0.5,-4.2) node [black,midway, xshift=-8mm] {$k$};

\end{tikzpicture}
} 
\end{minipage}
\hfill
\begin{minipage}{0.45\textwidth}
\centering
\resizebox{0.35\width}{!}{	
\begin{tikzpicture}
\tikzset{font={\fontsize{20pt}{12}\selectfont}}
\tikzset{v/.style={fill=white, draw=black, circle, inner sep=1pt, minimum size=5mm}}

	\node[v] at (0,4) (a) {};
	\node[v] at (0,2) (b) {};
	\node[v] at (1,2) (c) {};
	\node[v] at (1,0) (d) {};
	\node[v] at (2,0) (e) {};

	\node[v] at (2,-2) (f) {};
	\node[v] at (3,-2) (g) {};
	\node[v] at (3,-4) (h){};

	\draw[e, black] (a)--(b);
	\draw[e, black] (c)--(d);
	\draw[e, black] (g)--(h);
	
	\draw[e, black] (e) -- (f); 
	\node[fill=white, rectangle, minimum size=7mm] at (2,-1) (bullets){$\cdots$};
	
	\draw [decorate,decoration={brace,amplitude=10pt, mirror}]
(-0.5,4.2) -- (-0.5,-4.2) node [black,midway, xshift=-8mm] {$k$};

\end{tikzpicture}
} 
\end{minipage}
\caption{In the left picture the horizontal edge length is $k-3$ and the width is $1$, in the right picture the horizontal edge length is $0$ and the width is $k-2$, where $k$ is the number of layers.}
\label{motivation}
\end{figure}
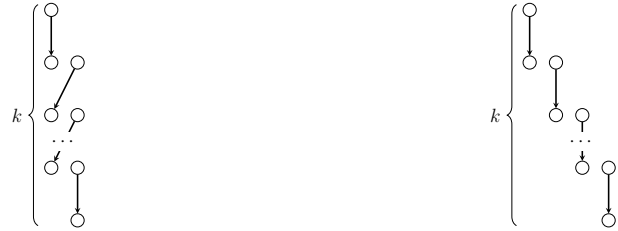

There may be further requirements for the final drawing, such as an aspect ratio in order to make optimal use of the drawing area, or a maximum distance between two nodes on the same layer if they are semantically related. A common request is that inner segments of long edges are drawn as vertical straight lines in order to improve readability.

\noindent
\emph{Related work.} Sugiyama et al.~\cite{Sugiyama} present a quadratic programming formulation that has a combination of two asthetic criteria as objective function, short edges (closeness to adjacent nodes) and a balanced layout (positioning nodes close to the barycenter of their upper and lower neighbors). 
Gansner et al.~\cite{Gansner} give a simpler formulation in which they replace quadratic terms of the form $(x_v - x_u)^2$ by $|x_v-x_u|$ and leave out the balance terms.
The coordinate assignment problem can be interpreted as an instance of the layer assignment problem, and they suggest to apply the network simplex algorithm to an auxiliary graph to obtain a drawing with minimum horizontal edge length.
Given an initial layout, some heuristics sweep through the layers and try to shift the nodes to better positions depending on the fixed $x$-coordinates of their neighbors in adjacent layers, see e.g. \cite{Eades,Sander,Sugiyama}
.
Two fast heuristics that compute coordinates from scratch are presented by
Buchheim et al.~\cite{BJL} and by Brandes and K\"opf~\cite{BK}. Both algorithms draw inner segments of long edges straight 
and aim for a balanced layout with short edges.

\noindent
\emph{Our Contribution.} We formulate the coordinate assignment problem as a minimum cost flow problem that can be solved efficiently. Within this formulation we can fix the maximum width of the final drawing as well as a maximum 
and minimum horizontal distance between nodes in the same layer and we can enforce straightness to some edges. We compute $x$-coordinates such that the total horizontal edge length is minimized subject to these further constraints.

\section{Notation and Preliminaries}

Let $G=(V,E)$ be a directed graph with $|V|=n$ nodes and $|E|=m$ edges. For a directed edge $e=(u,v)$ we denote the start node of $e$ with $\start(e)=u$ and the target node of $e$ with $\target(e)=v$. A \emph{path} $P$ from $u$ to $v$ of length $k$ is a set of edges $\{e_i=(v_i,v_{i+1})\mid i=1,\ldots,k\mbox{ where } u=v_1\mbox{ and }v=v_{k+1}\}$. We also write $u \overset{*}\rightarrow v$. If $v_{k+1}=v_1$ it is called a \emph{cycle}.   
A graph is called a \emph{directed acyclic graph (DAG)} if it has no cycles.
A \emph{layering} $\mathcal{L}$ of a graph assigns every $v\in V$ a \emph{layer} $L_i$, such that $i<j$ holds for every edge $e=(u,v)$ with $\mathcal{L}(u)=L_i$ and $\mathcal{L}(v)=L_j$. The layering is called \emph{proper} if $\mathcal{L}(v)=\mathcal{L}(u)+1$ for every edge~$(u,v)$, i.e., the layers of every pair of adjacent nodes are consecutive.
An edge that violates the latter property is called a \emph{long edge}. Every graph with a layering can be transformed into a graph with a proper layering by subdividing every long edge into a chain of edges.
We denote with $|\mathcal{L}|$ the number of layers and with $|L_i|$ the number of nodes in layer $L_i$.

An \emph{ordering} $\ord$ defines a partial ordering on the nodes of $G$. For every layer~$L_i$ it assigns each node in $L_i$ a number $1\leq j \leq |L_i|$ and we write $u<v$ if $\ord(u)<\ord(v)$. 
We denote with $v_j^i$ the $j$-th node in layer $L_i$.

Given a graph $G$ with a layering $\mathcal{L}$ and an ordering $\ord$ the \emph{horizontal coordinate assignment problem (HCAP)} asks for $x$-coordinates for every node, so that $x(u)<x(v)$ if $u<v$.  
We will restrict ourselves to integer coordinates. The \emph{horizontal length} of an edge $e=(u,v)$ is defined as $\length(e)=|x(v)-x(u)|$ and the \emph{total horizontal edge length} is $\length(E)= \sum_{e \in E} \length(e)$. 
The \emph{width} of the assignment is $\max_{v\in V} x(v) -  \min_{v\in V} x(v)$.
Unless otherwise stated, we mean the horizontal length whenever we talk about the length of an edge.

\emph{HCAP\textsubscript{minEL}} is the variant of HCAP in which we also want to mimimize the total horizontal edge length.


%

We assume familiarity with minimum cost flows. Ahuja et al.~\cite{Ahuja} give a good overview. Let $N=(V_N,E_N)$ be a directed graph with a super source $s$ and a super sink $t$, so for all other nodes the amount of incoming flow equals the amount of outgoing flow. We have lower and upper bounds on the edges and a cost function $\cost:E_N \rightarrow \mathbb{R}$. Let $f$ be a feasible flow.
For a subset of nodes $V^\prime~\subseteq~V_N\setminus~\{s,t\}$ we denote with $f(V^\prime)=\sum_{v\in V^\prime} \sum_{e=(v,w)}f(e)=~\sum_{v\in V^\prime} \sum_{e=(u,v)}f(e)$ the flow through $V^\prime$. For $s$ we define $f(s)$ to be the total amount of flow leaving $s$.
For a subset of edges $E^\prime \subset E_N$ we denote with $f(E^\prime)=\sum_{e\in E^\prime} f(e)$ the flow over $E^\prime$ and with $\cost(E^\prime)=\sum_{e\in E^\prime} \cost(e)$ the cost of $E^\prime$ and with $\cost_f~=~\sum_{e\in E_N}f(e)\cdot \cost(e)$ the total cost of~$f$.  


\section{Network Flow Formulation}
\label{network}

In this section we describe the construction of a network for the horizontal coordinate assignment problem. Given a minimum cost flow in this network we show how to obtain $x$-coordinates for all nodes such that the total horizontal edge length is minimized. 
By a simple modification we can compute $x$-coordinates that give us minimum total horizontal edge length with respect to a given maximum width of the drawing.
The basic idea is that flow represents horizontal distance and we send flow from top to bottom through the layers.

\subsection{Network Construction}
Let $G=(V,E)$ be a DAG with a proper layering $\mathcal{L}$ and an ordering and let $N=(V_N,E_N)$ be the minimum cost flow network. 
For now let us assume that neighboring nodes on a layer should have an equal minimum distance of one and that we have no further requirements concerning the edges.

For every layer $L_i$ with $i \in \{1,\ldots,|\mathcal{L}|\}$ we add nodes $w_0^i,w_1^i,\ldots,w_{|L_i|}^i$ and $z_0^i,z_1^i,\ldots,z_{|L_i|}^i$ to $N$. 
Imagine the node $w_j^i$ placed above the layer $L_i$ and between $v^i_j$ and $v^i_{j+1}$ ($w^i_0$ is placed at the left end and $w^i_{|L_i|}$ at the right end of the layer). The nodes $z^i_j$ are placed in the same way below layer $L_i$. 
Although we do not have a drawing of $G$ at this moment we can still use terms like ``above'' and ``below'' because the layering gives us a vertical ordering of the nodes of $G$ and we can talk about ``left'' and ``right'' because of the given ordering of the nodes in each layer. 
Since we are placing the nodes $w^i_j$ and $z^i_j$ ``between'' the nodes $v^i_j$ and $v^i_{j+1}$ we want to extend the ``$<$'' relation to give a partial ordering on $V \cup V_N$ in the following way: 
$w^i_0 < v^i_1 < w^i_1 < v^i_2 < \cdots < v^i_{|L_i|} < w^i_{|L_i|}$ and  $z^i_0 < v^i_1 < z^i_1 < v^i_2 < \cdots < v^i_{|L_i|} < z^i_{|L_i|}$.
We connect $w^i_j$ to $z^i_j$ with an edge $a^i_j$ that has a lower bound of one and an upper bound of $\infty$ and a cost of zero. 
The flow over these edges will define the distance between $v^i_j$ and $v^i_{j+1}$. 
We denote the set of these edges with $A$. Figure~\ref{arcsets}(a) shows an example.

\begin{figure*}[tb]
\subfloat[]{
\centering
\resizebox{0.7\width}{!}{	
\begin{tikzpicture}[rotate=-90, yscale=-1]

	\node[v, minimum size=10mm] at (0,-1.5) (v2) {$v_{j}$};
	\draw[e] (-2,-1.5) -- (v2);
	\draw[e] (v2) -- (2, -1);
	\draw[e] (v2) -- (2, -2);

	\node[n] at (-1.3, 0.5) (w) {$w_{j-1}^i$};
	\node[n] at (1.3, 0.5) (z) {$z_{j-1}^i$};
	\draw[a] (w) -- node[pos=0.6, left]{\textcolor{black}{$a^i_{j-1}$}}(z) ;
	
	\node[n] at (-1.3, -3.5) (w2) {$w_{j}^i$};
	\node[n] at (1.3, -3.5) (z2) {$z_{j}^i$};
	\draw[a] (w2) -- node[pos=0.4, right]{\textcolor{black}{$a^i_{j}$}}(z2) ;
	
	\draw[a] (w) .. controls +(-0.5,-2) .. (w2);
	\draw[a] (w2) .. controls +(0.5,2) ..(w);
	\node[] at (-2.7, -1.5) (cost1) {\textcolor{black}{$cost(\overleftarrow{bw}^i_j)$}\textcolor{black}{$=cost(\overrightarrow{bw}_j^i)$}\textcolor{black}{$=1$}};
	
	\draw[a] (z) .. controls +(-0.5,-2) ..(z2);
	\draw[a] (z2) .. controls +(0.5,2) .. (z);
	\node[] at (2.7, -1.5) (cost2) {\textcolor{black}{$cost(\overleftarrow{bz}^i_j$)}\textcolor{black}{$=cost(\overrightarrow{bz}_j^i)$}\textcolor{black}{$=2$}};
	
	\node at (0.6, -0.5) (text1) {$\overrightarrow{bz}^i_j$};
	\node at (2, -2.7) (text1) {$\overleftarrow{bz}^i_j$};
	\node at (-2, -0.5) (text1) {$\overrightarrow{bw}^i_j$};
	\node at (-0.6, -2.7) (text1) {$\overleftarrow{bw}^i_j$};
\end{tikzpicture}
}
}
\hfill
\subfloat[]{
\centering\resizebox{0.7\width}{!}{	
\begin{tikzpicture}[rotate=-90, yscale=-1
]
	\node[v, minimum size=10mm] at (0,1) (vk) {$v_j^i$};
	\node[v] at (0,-1) (v2) {$v_{j+1}^i$};
	\node at (0,2.5) (bullets){$\bullet \bullet \bullet$};
	\node[v, minimum size=10mm] at (0,4) (a){};
	\node at (0,-2.5) (bullets2){$\bullet \bullet \bullet$};
	\node[v, minimum size=10mm] at (0,-4) (b){};

	\node[v, minimum size=10mm] at (4,1) (vl) {$v_k^{i+1}$};
	\node[v] at (4,-1) (v3) {$v_{k+1}^{i+1}$};
	\node at (4,-2.5) (bullets3){$\bullet \bullet \bullet$};
	\node[v, minimum size=10mm] at (4,-4) (c){};
	\node at (4,2.5) (bullets4){$\bullet \bullet \bullet$};
	\node[v, minimum size=10mm] at (4,4) (d){};

	\draw[e] (vk) -- (d) node[pos=0.2, left]{$e_1$} ;
	\draw[e] (v2) -- (c)node[pos=0.2, right]{$e_2$};
	
	\draw[e] (a) -- (vl) node[pos=0.8, left]{$e_3$} ;
	\draw[e] (b)-- (v3)node[pos=0.8,right]{$e_4$} ;
	

	\node[n] at (1, 0) (z) {$z^i_j$};
	\node[n] at (3, 0) (w) {$w^{i+1}_k$};
	\draw[a] (z) -- node[left]{\textcolor{black}{$c^i_{jk}$}}(w);
	
\end{tikzpicture}
}
}
\caption{Illustration of edges of the sets (a) $A$, $B$ and (b) $C$. Nodes of $G$ are white circles, nodes of $N$ are green rectangles. Edges of $G$ are gray, edges of $N$ are green.
}
\label{arcsets}
\end{figure*}
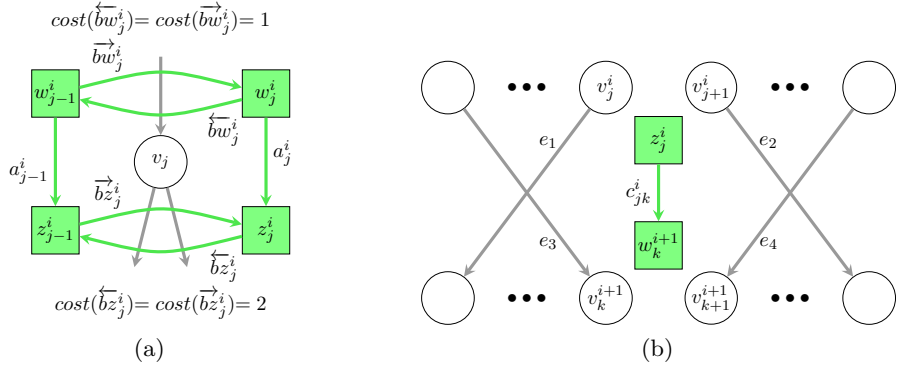

For every layer $L_i$ with $i \in \{1,\ldots,|\mathcal{L}|\}$ and every $j \in \{ 0,\ldots,|L_i| -1\}$ we add edges $\overrightarrow{bw}_j^i=(w_j^i,w_{j+1}^i)$, $\overleftarrow{bw}_j^i=(w_{j+1}^i,w_{j}^i)$, $\overrightarrow{bz}_j^i=(z_j^i,z_{j+1}^i)$ and $\overleftarrow{bz}_j^i=(z_{j+1}^i,z_{j}^i)$ to $N$. 
The lower bound of these edges is zero and the upper bound is $\infty$. The cost of these network edges equals the number of graph edges they ``cross over''. That means, the cost of $\overleftarrow{bw}_j^i$ and $\overrightarrow{bw}_j^i$ equals the number of incoming graph edges of node $v^i_j$ and the cost of $\overleftarrow{bz}_j^i$ and $\overrightarrow{bz}_j^i$ equals the number of outgoing graph edges of~$v^i_j$, see Fig.~\ref{arcsets}(a). 
Positive flow over one of these edges will cause the crossed-over graph edges to have positive horizontal length. We call the set of these edges~$B$.


Now we connect the nodes of neighboring layers. We could add edges between every $z^i_j$ and every $w^{i+1}_k$, but we want to keep the number of edges between layers as small as possible. We add edges only in special situations and will show later that this suffices for correctness.
For every layer~$L_i$ with $i \in \{1,\ldots,|\mathcal{L}|-1\}$ we add edges $c^i_{00}=(z^i_0,w^{i+1}_0)$ and $c^i_{|L_i||L_{i+1}|}=(z^i_{|L_i|}, w^{i+1}_{|L_{i+1}|})$ to the network with a lower bound of zero, an upper bound of $\infty$ and a cost of zero. 
Additionally we add edges $c^i_{jk}=(z^i_j, w^{i+1}_k)$ if there exist $ e_1,e_2,e_3,e_4 \in E$ with $\start(e_1)=v^i_j$, $\start(e_2)=v^i_{j^\prime}$, where $v^i_{j^\prime}$ is the next node to the right of $v^i_j$ with an outgoing edge and $\target(e_3)=v^{i+1}_k$, $\target(e_4)=v^{i+1}_{k^\prime}$, where $v^{i+1}_{k^\prime}$ is the next node to the right of $v^{i+1}_k$ with an incoming edge
and the following conditions holds:
$\start(e_3)\leq \start(e_1) < \start(e_2)\leq \start(e_4)$ and $\target(e_1)\leq  \target(e_3) <  \target(e_4) \leq  \target(e_2)$.
We call this situation a \emph{hug} between $z^i_j$ and $w^{i+1}_k$.
These edges get a lower bound of zero, an upper bound of~$\infty$, and the cost equals the number of graph edges they cross over:
$\cost(c^i_{jk})=|\{ e=(v^i_p, v^i_q)\in E~|~p\leq j \wedge q\geq k^\prime \mbox{ or } p\geq j^\prime \wedge q\leq k\} |$. Like the edges of $B$, flow on edges of this kind will cause horizontal length and we denote the set of all~$c^i_{jk}$ by $C$. Figure~\ref{arcsets}(b) illustrates a hug situation.


Finally we add a super source $s$ and a super sink $t$ to the network. We connect $s$ with every $w^1_j$, $j\in \{ 1,\ldots, |L_1|\}$ and $t$ with every $z^{|\mathcal{L}|}_k$, $k\in \{ 1,\ldots,|L_{|\mathcal{L}|}|\}$. 
These edges get a lower bound of zero, an upper bound of $\infty$ and a cost of zero. Figure~\ref{example} shows a complete example network.
If it is clear from the context which layer or which node is meant, we omit the node subscripts and superscripts. 

\begin{figure}[tb]
\centering
\includegraphics[scale=0.2]{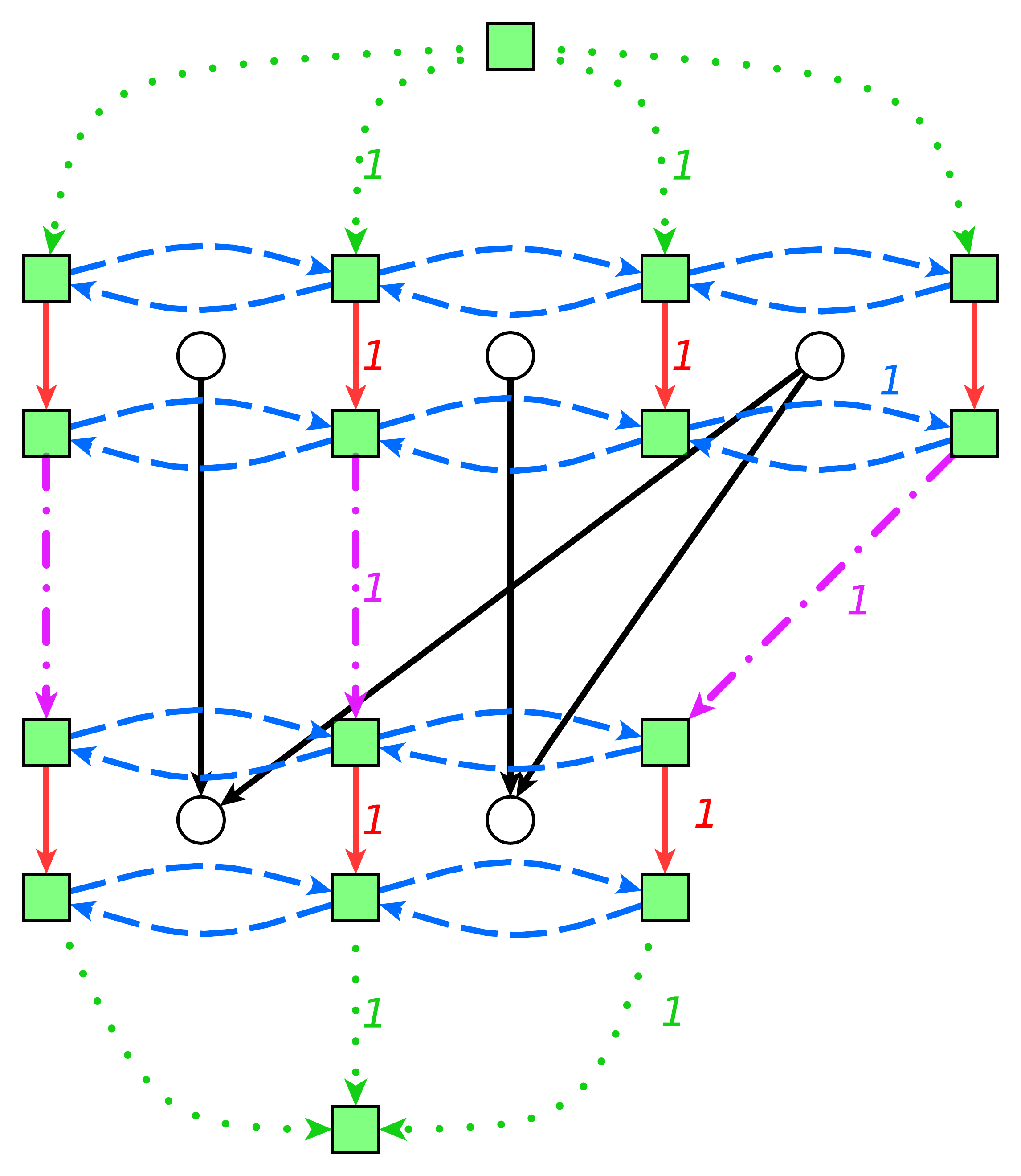}
\caption{An example network with underlying graph. Edges of the set $A$ are red (solid), edges of the set $B$ are blue (dashed) and edges of the set $C$ are purple (dashdotted). The numbers along the edges denote the flow, unlabeled edges carry no flow.}
\label{example}
\end{figure}

\subsection{Obtaining Coordinates and Correctness}

Let $f$ be a feasible flow in the network described above. We observe that $f(a^i_j)=f(w^i_j)=f(z^i_j)$ since $a^i_j$ is the only outgoing edge of $w^i_j$ and the only incoming edge of $z^i_j$. 
We define the $x$-coordinate of a node $v^i_j$ as
\begin{equation} \label{defx}
 x(v^i_j) := \sum_{l=0}^{j-1} f(a^i_l) = \sum_{l=0}^{j-1} f(w^i_l) = \sum_{l=0}^{j-1} f(z^i_l).
\end{equation}
Together with $y(v^i_j)=i$ we get an induced drawing with a feasible coordinate assignment, because for every $v_j, v_k$ within the same layer $x(v_j)<x(v_k)$ if and only if $v_j<v_k$ (since the amount of flow over edges a~$\in A$ is always positive).

Now we want to explain the correspondence between the cost of a flow $f$ and the total horizontal edge length of the resulting drawing. The intuition is, that if flow is sent from the right of $\start(e)$ to the left of $\target(e)$ for some edge $e$, then $\target(e)$ is ``pushed'' to the right because of the additional flow on the left. This results in a horizontal expansion of $e$.
We define for an edge $e=(u,v)\in E$
\begin{eqnarray*}
\overrightarrow{E}(e)&:=&\{ bw \in B\mid \start(bw) < v~\wedge~\target(bw) > v \}\\
&&\cup~\{ bz \in B\mid \start(bz) < u~\wedge~\target(bz) > u \}\\
&&\cup~\{ c \in C\mid\start(c) < u~\wedge~\target(c) > v \}
\end{eqnarray*}
as the set of network edges that start to the left of $e$ and end to the right of $e$, thus cross over $e$ from left to right. 
Analogously the set of network edges that cross over a graph edge from right to left is 
\begin{eqnarray*}
\overleftarrow{E}(e)&:=&\{ bw \in B\mid\start(bw) > v~\wedge~\target(bw) < v \}\\
&&\cup~\{ bz \in B\mid\start(bz) > u~\wedge~\target(bz) < u \}\\
&&\cup~\{ c \in C\mid\start(c) > u~\wedge~\target(c) < v\}.
\end{eqnarray*}

We make the following observations: 
\begin{prop} \label{cost=E<->}
$\cost(g)=| \{ e\in E\mid g\in \overrightarrow{E}(e) \}| + |\{ e\in E\mid g\in \overleftarrow{E}(e) \} |\; \forall g \in B\cup C$.
\end{prop}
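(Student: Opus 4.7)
The plan is to prove the property by a straightforward case analysis on which of the three families of network edges $g$ belongs to: edges $\overrightarrow{bw}^i_j,\overleftarrow{bw}^i_j$, edges $\overrightarrow{bz}^i_j,\overleftarrow{bz}^i_j$, and edges $c^i_{jk}\in C$. In each case the construction assigned $\cost(g)$ to equal a local count of graph edges crossed, so the job is to show that this local count matches the global crossing count on the right-hand side.

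For a $bw$ edge, the two endpoints $w^i_{\cdot},w^i_{\cdot}$ straddle exactly one graph node $v$ of layer $L_i$ in the extended ordering on $V\cup V_N$. Thus for a graph edge $e=(u,v')$, the defining inequalities $\start(g)<v'<\target(g)$ of $\overrightarrow{E}(e)$ (respectively $\start(g)>v'>\target(g)$ of $\overleftarrow{E}(e)$) are satisfied iff $v'=v$, i.e., iff $e$ is an incoming edge of $v$. Depending on the orientation of $g$, each incoming edge of $v$ contributes to exactly one of the two counts, so the right-hand side equals the number of incoming edges of $v$, which is $\cost(g)$ by construction. The $bz$ case is handled identically, with outgoing edges of $v$ replacing incoming edges.

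The only case requiring a little care is $g=c^i_{jk}=(z^i_j,w^{i+1}_k)$. From the extended ordering, $\start(g)$ lies strictly between $v^i_j$ and $v^i_{j+1}$, and $\target(g)$ lies strictly between $v^{i+1}_k$ and $v^{i+1}_{k+1}$. Hence a graph edge $e=(v^i_p,v^{i+1}_q)$ satisfies $g\in\overrightarrow{E}(e)$ iff $p\geq j+1$ and $q\leq k$, and $g\in\overleftarrow{E}(e)$ iff $p\leq j$ and $q\geq k+1$. To reconcile these bounds with the formula that defines $\cost(c^i_{jk})$ in terms of $j'$ and $k'$, I invoke the hug assumption: by definition of $j'$ no node $v^i_p$ with $j<p<j'$ has an outgoing edge, and by definition of $k'$ no node $v^{i+1}_q$ with $k<q<k'$ has an incoming edge. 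Therefore, for graph edges that actually exist, the condition $p\geq j+1$ is equivalent to $p\geq j'$, and $q\geq k+1$ is equivalent to $q\geq k'$. Substituting yields exactly the two disjoint sets appearing in the definition of $\cost(c^i_{jk})$.

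The main (small) obstacle is thus the $C$-case: without the hug-induced absence of edge endpoints in the ranges $(j,j')$ and $(k,k')$, the local cost formula would not line up with the global crossing count. The $bw$ and $bz$ cases, by contrast, follow immediately from the extended ordering on $V\cup V_N$ together with the observation that each such edge crosses a single graph node.
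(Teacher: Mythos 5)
Your argument is correct, and in fact the paper offers no proof at all for this property --- it is asserted as an observation --- so your case analysis is precisely the bookkeeping the paper leaves implicit. Your identification of the one non-trivial point is right: for $g=c^i_{jk}$ the raw straddling conditions give $p\geq j+1$ and $q\geq k+1$, and only the hug-induced absence of outgoing (resp.\ incoming) endpoints in the index ranges $(j,j')$ and $(k,k')$ lets these collapse to the $p\geq j'$ and $q\geq k'$ appearing in $\cost(c^i_{jk})$; the two resulting sets are disjoint, matching the sum on the right-hand side. Two small remarks. First, your $C$-case covers only the hug edges; the boundary edges $c^i_{00}$ and $c^i_{|L_i||L_{i+1}|}$ are also in $C$ and have cost zero, but for them both sets on the right-hand side are empty (no $v^i_p$ lies left of $z^i_0$, none lies right of $z^i_{|L_i|}$), so the property holds trivially --- worth one sentence. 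Second, you tacitly normalize an indexing inconsistency in the paper: the text defines $\overrightarrow{bw}^i_j=(w^i_j,w^i_{j+1})$, which under the extended ordering straddles $v^i_{j+1}$, while the stated cost is the indegree of $v^i_j$ (the figure uses the other convention, under which cost and straddled node agree). Your formulation ``each $bw$ edge straddles exactly one node $v$ and its cost is $\mathrm{indeg}(v)$'' is the intended reading and makes the $B$-case immediate, but it would be cleaner to flag that you are using the figure's indexing rather than the text's.
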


\begin{prop} \label{floww=flowz}
Because of the flow conservation rule we have \\$\sum_{j=0}^{|L_i|} f(w^i_j) = \sum_{j=0}^{|L_i|} f(z^i_j)$ for all $i\in \{1,\ldots,|\mathcal{L}|\}$ and \\$\sum_{j=0}^{|L_i|} f(w^i_j) = \sum_{j=0}^{|L_{k}|} f(w^k_j)=f(s)$ for all $i,k\in \{1,\ldots,|\mathcal{L}|\}$.
\end{prop}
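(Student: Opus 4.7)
The plan is to apply the flow conservation rule to well-chosen subsets of network vertices rather than to individual vertices; this turns both claims into bookkeeping about which edges cross the boundary of each subset.

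First I would introduce the row sets $W^i := \{w^i_0, \ldots, w^i_{|L_i|}\}$ and $Z^i := \{z^i_0, \ldots, z^i_{|L_i|}\}$ and inspect the edges incident to them. By construction, the $B$-edges $\overrightarrow{bw}_j^i, \overleftarrow{bw}_j^i$ run between consecutive members of $W^i$ and are therefore internal; the $s$-edges (for $i=1$) and the $C$-edges (for $i>1$) are incoming to $W^i$; and the only edges of $W^i$ that leave the set are the $A$-edges $a^i_j$. Applying flow conservation to $W^i$ gives
\[
f(W^i) \;=\; \sum_{j=0}^{|L_i|} f(a^i_j).
\]
The analogous inspection of $Z^i$ shows that its only incoming external edges are exactly these $a^i_j$, so $f(Z^i) = \sum_{j=0}^{|L_i|} f(a^i_j)$ as well, which yields the first equation.

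Next I would chain layers via the $C$-edges. For $1 \le i < |\mathcal{L}|$, the outgoing external edges of $Z^i$ are precisely the $C$-edges going into $W^{i+1}$, and conversely these are precisely the incoming external edges of $W^{i+1}$ (no $s$-edges are present for $i+1 > 1$). Flow conservation therefore gives $f(Z^i) = f(W^{i+1})$, and combining with $f(W^i) = f(Z^i)$ from the previous step yields $f(W^i) = f(W^{i+1})$. Iteration propagates the equality through all layers. It remains to identify $f(W^1)$ with $f(s)$: since $s$'s only outgoing edges lead into $W^1$, the definition of $f(s)$ together with flow conservation at $W^1$ gives $f(s) = f(W^1)$, completing the second claim.

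I do not expect any real obstacle in this argument; the only care required is to verify at the boundary nodes $w^i_0, w^i_{|L_i|}, z^i_0, z^i_{|L_i|}$ that the edge inventory above is exhaustive, in particular that $s$ does not attach to $w^1_0$ and that no stray $C$-edge leaves a row through a side other than the advertised one.
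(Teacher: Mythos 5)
Your proposal is correct and is essentially the argument the paper intends: the paper states this Property without further proof, attributing it to flow conservation, and your elaboration---applying conservation to the row sets $W^i$ and $Z^i$, noting that the $B$-edges are internal, that the only edges leaving $W^i$ are the $a^i_j$ (whose flows equal the $f(w^i_j)=f(z^i_j)$ by the observation preceding equation~(\ref{defx})), and chaining consecutive layers through the $C$-edges back to $f(s)$---is exactly the cut-based bookkeeping that makes this precise. Your boundary checks (e.g.\ $s$ attaching only to $w^1_j$ for $j\geq 1$) are consistent with the construction, so no gap remains.
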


\begin{prop} \label{s<width}
The width of the induced 
drawing is \\
$\max_{1\leq i \leq |\mathcal{L}|} \left( \sum_{j=1}^{|L_i|-1} f(w_j^i) \right)\leq f(s)$. 
\end{prop}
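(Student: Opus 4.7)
The plan is to express the drawing width as an algebraic function of the $a$-edge flows, one layer at a time, and then use Property~\ref{floww=flowz} to bound that function by $f(s)$.

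First, I would compute the horizontal span of a single layer $L_i$. From the definition $x(v^i_j)=\sum_{l=0}^{j-1}f(w^i_l)$, the leftmost node of $L_i$ sits at $x(v^i_1)=f(w^i_0)$ and the rightmost at $x(v^i_{|L_i|})=\sum_{l=0}^{|L_i|-1}f(w^i_l)$; subtracting gives the intra-layer span $\sum_{j=1}^{|L_i|-1}f(w^i_j)$. Since within layer $L_i$ no node has an $x$-coordinate outside $[x(v^i_1),x(v^i_{|L_i|})]$, the widest such interval across all layers governs the drawing's width, yielding $\max_{1\le i\le|\mathcal{L}|}\sum_{j=1}^{|L_i|-1}f(w^i_j)$.

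For the $\le f(s)$ bound, I would apply Property~\ref{floww=flowz}: $\sum_{j=0}^{|L_i|}f(w^i_j)=f(s)$ for every layer $i$. Because each $f(w^i_j)$ is non-negative (in fact $f(w^i_0),f(w^i_{|L_i|})\ge 1$ by the lower bound on edges in $A$), removing the two boundary summands only decreases the sum, so $\sum_{j=1}^{|L_i|-1}f(w^i_j)\le f(s)$, and the bound survives the maximum over $i$.

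The step I expect to be the main obstacle is the first paragraph's identification of the global $\max{-}\min$ of $x$-coordinates with the widest single layer's span: a priori one layer could realize the leftmost $x$-coordinate while a different layer realizes the rightmost, which could inflate $\max_v x(v)-\min_v x(v)$ beyond any per-layer span. Justifying the identification requires exploiting the way the $c$-edges tie together the boundary flows $f(w^i_0)$ and $f(w^i_{|L_i|})$ across consecutive layers, so that the global extreme $x$-coordinates are attained within a common layer.
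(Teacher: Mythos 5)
The paper states this Property as a bare observation with no proof, so there is nothing of the authors' to compare against; judged on its own, your algebra is correct where it is concrete, but the gap you flag in your final paragraph is real and the way you propose to close it would fail. Your per-layer span computation ($x(v^i_{|L_i|})-x(v^i_1)=\sum_{j=1}^{|L_i|-1}f(w^i_j)$) and the bound $\sum_{j=1}^{|L_i|-1}f(w^i_j)\le\sum_{j=0}^{|L_i|}f(w^i_j)=f(s)$ via Property~\ref{floww=flowz} are both fine. But the $c$-edges do \emph{not} tie the boundary flows together across layers: the $B$-edges let flow migrate freely between columns both within a layer and on the way to the next one, so $f(w^i_0)$ can vary with $i$. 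Concretely, take two layers with a single node each, joined by one graph edge, and a feasible (non-optimal) flow with $f(s)=8$, $f(a^1_0)=5$, $f(a^2_0)=1$; then $x(v^1_1)=5$ and $x(v^2_1)=1$, the width is $4$, yet every per-layer span is $0$. So the global minimum and maximum of $x$ need not be attained in a common layer, and the identification of the width with $\max_i\sum_{j=1}^{|L_i|-1}f(w^i_j)$ cannot be proved for an arbitrary feasible flow, no matter how the $c$-edges are exploited.

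What you can prove, and all the paper later relies on, is the inequality $\mathit{width}\le f(s)$, and that needs no per-layer identification at all: every coordinate satisfies $x(v^i_j)\ge x(v^i_1)=f(a^i_0)\ge 1>0$ because of the lower bound on the edges in $A$, so the width is at most $\max_{v}x(v)=\max_i\sum_{l=0}^{|L_i|-1}f(w^i_l)\le\max_i\sum_{l=0}^{|L_i|}f(w^i_l)=f(s)$, the last equality again by Property~\ref{floww=flowz}. I suggest you restructure the proof around this chain and treat the displayed quantity $\max_i\sum_{j=1}^{|L_i|-1}f(w^i_j)$ only as the maximum per-layer span (for which your $\le f(s)$ argument already works), not as the width of the drawing itself.
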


\begin{prop} \label{w=z+<->}
Let $e=(v^i_j, v^{i+1}_k)$ be an edge. Then \\$\sum_{w^{i+1}_l < v^{i+1}_k} f(w^{i+1}_l) = \sum_{z^{i}_l < v^{i}_j} f(z^{i}_l) + f(\overleftarrow{E}(e)) - f(\overrightarrow{E}(e))$.
\end{prop}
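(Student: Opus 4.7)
My plan is to derive the identity by applying flow conservation twice---once to the prefix of $w$-nodes to the left of $v^{i+1}_k$ and once to the prefix of $z$-nodes to the left of $v^i_j$---and then subtracting. Using the equality $f(w^i_l)=f(a^i_l)=f(z^i_l)$ noted before equation~(\ref{defx}), together with the extended ordering (so that $w^{i+1}_l<v^{i+1}_k$ iff $l<k$, and analogously for the $z$-nodes), it suffices to show
\[
\sum_{l=0}^{k-1} f(a^{i+1}_l)\;-\;\sum_{l=0}^{j-1} f(a^i_l)\;=\;f(\overleftarrow{E}(e))-f(\overrightarrow{E}(e)).
\]

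For the first term, I apply flow conservation to $S_w:=\{w^{i+1}_0,\ldots,w^{i+1}_{k-1}\}$. The only edges crossing the boundary of $S_w$ are the $C$-edges $c^i_{pq}$ with $q<k$ (entering), the $a$-edges $a^{i+1}_0,\ldots,a^{i+1}_{k-1}$ (leaving), and the single lateral pair $\overleftarrow{bw}^{i+1}_{k-1}$ (entering) and $\overrightarrow{bw}^{i+1}_{k-1}$ (leaving); the super source is not adjacent to $S_w$ because $i+1\ge 2$. This yields
\[
\sum_{l=0}^{k-1} f(a^{i+1}_l)=\sum_{\substack{p,q:\,q<k}} f(c^i_{pq})+f(\overleftarrow{bw}^{i+1}_{k-1})-f(\overrightarrow{bw}^{i+1}_{k-1}).
\]
An analogous computation on $S_z:=\{z^i_0,\ldots,z^i_{j-1}\}$ (with $i\le|\mathcal{L}|-1$ excluding the super sink) gives
\[
\sum_{l=0}^{j-1} f(a^i_l)=\sum_{\substack{p,q:\,p<j}} f(c^i_{pq})+f(\overrightarrow{bz}^i_{j-1})-f(\overleftarrow{bz}^i_{j-1}).
\]

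Subtracting the two equations, the $C$-sums split according to the four quadrants of index pairs defined by $p<j$ vs.\ $p\ge j$ and $q<k$ vs.\ $q\ge k$. The quadrant $\{p<j,\,q<k\}$ cancels, leaving $\sum_{p\ge j,\,q<k} f(c^i_{pq})-\sum_{p<j,\,q\ge k} f(c^i_{pq})$, which are precisely the $C$-contributions to $f(\overleftarrow{E}(e))$ and $f(\overrightarrow{E}(e))$: the $C$-edge $c^i_{pq}$ crosses $e$ from right to left iff $p\ge j$ and $q<k$, and from left to right iff $p<j$ and $q\ge k$. The remaining terms $f(\overleftarrow{bw}^{i+1}_{k-1})+f(\overleftarrow{bz}^i_{j-1})-f(\overrightarrow{bw}^{i+1}_{k-1})-f(\overrightarrow{bz}^i_{j-1})$ are the full $B$-contributions to $f(\overleftarrow{E}(e))-f(\overrightarrow{E}(e))$, since each $B$-edge straddles exactly one graph node and only those with index $k-1$ in layer $i+1$ (respectively $j-1$ in layer $i$) straddle $v^{i+1}_k$ (respectively $v^i_j$). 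The only delicate point is this final bookkeeping; everything else is a direct consequence of flow conservation and therefore holds for every feasible flow in $N$.
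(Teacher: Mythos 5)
Your proof is correct and follows essentially the same route as the paper, which justifies Property~\ref{w=z+<->} only by an informal flow-conservation argument over the network nodes lying to the left of $\start(e)$ and $\target(e)$ (illustrated in Fig.~\ref{prop4}). Your version is simply a careful formalization of that sketch: the boundary accounting for $S_w$ and $S_z$, the quadrant split of the $C$-edges, and the identification of $\overrightarrow{bw}^{i+1}_{k-1}$, $\overleftarrow{bw}^{i+1}_{k-1}$, $\overrightarrow{bz}^{i}_{j-1}$, $\overleftarrow{bz}^{i}_{j-1}$ as the only $B$-contributions all check out.
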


The last property is illustrated in Fig.~\ref{prop4}. The total flow that reaches all $w^{i+1}_j$ that are to left of $\target(e)$ comes from the $z^i_j$ that are to the left of $\start(e)$ and from nodes that are to the right of $\target(e)$ or $\start(e)$. Flow from the latter nodes has to pass over $e$ from right to left.
Flow from a node $z^i_j$ that is to the left of $\start(e)$ and does not enter one of the $w^{i+1}_j$ left of $\target(e)$ has to pass over $e$ from left to right.
 
\begin{lemma}
For a feasible flow $f$ and the induced drawing $\cost_f\ge\length(E)$ holds.
\end{lemma}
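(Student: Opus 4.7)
The plan is to rewrite $\cost_f$ as a sum over the original edges $e \in E$ rather than over the network edges $g \in E_N$, and then to bound each per-edge contribution from below by $\length(e)$ using the properties listed above.

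First I would note that only edges in $B \cup C$ carry positive cost (edges in $A$, the two trivial $c^i_{00}, c^i_{|L_i||L_{i+1}|}$ edges, and the source/sink edges all have cost zero), so
\begin{equation*}
\cost_f = \sum_{g \in B \cup C} f(g)\cdot\cost(g).
\end{equation*}
Applying Property~\ref{cost=E<->} and swapping the order of summation then yields
\begin{equation*}
\cost_f = \sum_{e \in E} \Bigl( f(\overrightarrow{E}(e)) + f(\overleftarrow{E}(e)) \Bigr).
\end{equation*}

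Next I would fix an arbitrary edge $e = (v^i_j, v^{i+1}_k) \in E$ and show $\length(e) \leq f(\overrightarrow{E}(e)) + f(\overleftarrow{E}(e))$. By the coordinate definition~\eqref{defx}, $x(v^i_j) = \sum_{z^i_l < v^i_j} f(z^i_l)$ and $x(v^{i+1}_k) = \sum_{w^{i+1}_l < v^{i+1}_k} f(w^{i+1}_l)$. Substituting these into Property~\ref{w=z+<->} gives
\begin{equation*}
x(v^{i+1}_k) - x(v^i_j) = f(\overleftarrow{E}(e)) - f(\overrightarrow{E}(e)),
\end{equation*}
so that
\begin{equation*}
\length(e) = \bigl|f(\overleftarrow{E}(e)) - f(\overrightarrow{E}(e))\bigr| \leq f(\overleftarrow{E}(e)) + f(\overrightarrow{E}(e)),
\end{equation*}
where the last inequality uses non-negativity of the flow. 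Summing over all $e \in E$ then gives $\length(E) \leq \cost_f$.

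The only mildly delicate step is the rearrangement from the sum over network edges to the sum over graph edges: one has to be sure that Property~\ref{cost=E<->} indeed accounts for every unit of cost, i.e.\ that the zero-cost $C$-edges (the leftmost and rightmost ones between adjacent layers) and all $A$-edges can be safely ignored. Once that bookkeeping is cleared up, the per-edge bound via Property~\ref{w=z+<->} together with the triangle inequality $|\alpha - \beta| \leq \alpha + \beta$ for non-negative $\alpha, \beta$ finishes the argument.
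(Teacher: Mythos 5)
Your proposal is correct and is essentially the paper's own proof run in the opposite direction: the paper starts from $\length(e)=\abs{x(v^i_j)-x(v^{i+1}_k)}$, applies~\eqref{defx} and Property~\ref{w=z+<->} to get $\abs{f(\overrightarrow{E}(e))-f(\overleftarrow{E}(e))}$, bounds this by the sum, and then uses Property~\ref{cost=E<->} to regroup by network edges into $\cost_f$ — exactly your three ingredients in reverse order. The bookkeeping point you flag is harmless, since Property~\ref{cost=E<->} assigns the zero-cost edges of $C$ a count of zero on both sides.
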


\begin{proof}
	Let $e=(v^i_j,v^{i+1}_k)$ be an edge of $G$.	
	The length of $e$ is $\length(e)=|x(v^i_j) - x(v^{i+1}_k)|$ and together with (\ref{defx}) we have 
\begin{eqnarray*}
\length(e)&=&\abs[\Big]{ \sum_{l=0}^{j-1} f(z^i_l) - \sum_{l=0}^{k-1} f(w^{i+1}_l)} \\			
&=&\abs[\Big]{\sum_{z^{i}_l < \start(e)} f(z^i_l) - \sum_{w^{i+1}_l < \target(e)} f(w^{i+1}_l)} \\
&=&\abs[\Big]{f(\overrightarrow{E}(e)) - f(\overleftarrow{E}(e))}\qquad\mbox{(by Property~\ref{w=z+<->}).}
\end{eqnarray*}
Therefore we have for the total edge length
\begin{eqnarray*}
\length(E)&=&\sum_{e\in E} \abs[\Big]{f(\overrightarrow{E}(e)) - f(\overleftarrow{E}(e))}\\
&\le&\sum_{e\in E} \left(\abs[\Big]{f(\overrightarrow{E}(e))} + \abs[\Big]{f(\overleftarrow{E}(e)}\right)\\
&=&\sum_{e\in E} \left( f(\overrightarrow{E}(e)) + f(\overleftarrow{E}(e)\right)\\
&=&\sum_{g \in E_N} f(g) \cdot \abs{\{ e\in E~|~g\in \overrightarrow{E}(e) \} \cup \{ e\in E~|~g\in \overleftarrow{E}(e) \}} \\
&=&\sum_{g\in E_N} f(g)\cdot \cost(g)\qquad\mbox{(by Property~\ref{cost=E<->})}   \\
&=&\cost_f.
\end{eqnarray*}	
\qed
\end{proof}

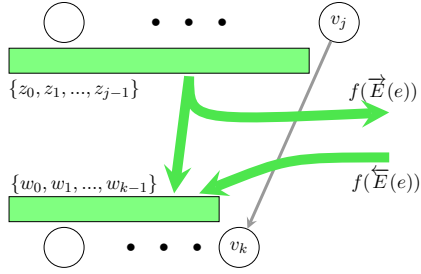
\begin{figure}[tb]
\centering\resizebox{0.665\width}{!}{	
\begin{tikzpicture}[rotate=-90, xscale=0.9]

	\node[v, minimum size=8mm] at (0,2) (l) {$v_j$};
	\node[v, minimum size=8mm] at (5, 0) (r) {$v_k$};
	\draw[e] (l) -- (r);
	
	\node at (0, -1) (bullets){$\bullet \quad \bullet\quad \bullet$};
	\node[v, minimum size=8mm] at (0,-3.5) (a) {};
	\node at (5, -1.5) (bullets2){$\bullet \quad \bullet\quad \bullet$};
	\node[v, minimum size=8mm] at (5,-3.5) (b) {};
	
	\node[rectangle,draw, fill=green!50, minimum height=5mm, minimum width=60mm] at (0.85,-1.6) (w) {};
	\node[rectangle,draw, fill=green!50, minimum height=5mm, minimum width=42mm] at (4.15,-2.5) (z) {};
	\node[black] at (1.5, -3.3) {$\{z_0,z_1,...,z_{j-1}\}$}; 
	\node[black] at (3.6, -3.1) { $\{w_0,w_1,...,w_{k-1}\}$}; 
	
	\draw[a, line width=2mm, green!60!lightgray] (1.2,-1)--(3.8,-1.3);
	\draw[a, line width=2mm, green!60!lightgray] (1.2,-1).. controls +(1,-0).. node[black,above, pos=0.99]{$f(\overrightarrow{E}(e))$} (2,3);
	\draw[a, line width=2mm, green!60!lightgray] (3,3).. controls +(0,-2).. node[black, below, pos=0.01]{$f(\overleftarrow{E}(e))$} (3.8,-0.8);

\end{tikzpicture}
}
\caption{Illustration of Property~\ref{w=z+<->}. The rectangles represent all network nodes to the left of $v_j$ and $v_k$, respectively. The thick arrows represent the flow of several edges.
}
\label{prop4}
\end{figure}
\newpage
\begin{lemma}
Let $\Gamma$ be a drawing of $G$.
There exists a flow $f$ that induces $\Gamma$ and whose cost is equal to the total edge length of $\Gamma$.
\end{lemma}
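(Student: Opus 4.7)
The plan is to construct $f$ directly from $\Gamma$: fix the flow on the $A$-edges from the coordinates, decompose the remaining $W$ units of flow into unit ``columns'' indexed by integer $x$-coordinates tracing down the vertical lines of $\Gamma$, and then invoke the previous lemma to obtain the matching lower bound $\cost_f\ge\length(E)$.

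After translating $\Gamma$ so that $x(v)\ge 1$ for every $v$, set $W:=\max_{i}x(v^i_{|L_i|})+1$ and define $f(a^i_0):=x(v^i_1)$, $f(a^i_j):=x(v^i_{j+1})-x(v^i_j)$ for $1\le j\le|L_i|-1$, and $f(a^i_{|L_i|}):=W-x(v^i_{|L_i|})$. Each value is at least $1$ because consecutive coordinates in $\Gamma$ differ by at least~$1$, so the lower bounds on $A$ are respected, and equation~(\ref{defx}) reproduces the coordinates of $\Gamma$. For each $p\in\{1,\dots,W\}$ and each layer $i$ let $j_i(p)$ denote the unique gap index with $x(v^i_{j_i(p)})<p\le x(v^i_{j_i(p)+1})$ (using the conventions $x(v^i_0):=0$ and $x(v^i_{|L_i|+1}):=W$); then $|\{p:j_i(p)=j\}|=f(a^i_j)$. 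I would send one unit of flow along each column $p$ from $s\to w^1_{j_1(p)}\to a^1_{j_1(p)}\to z^1_{j_1(p)}\to\cdots\to z^{|\mathcal{L}|}_{j_{|\mathcal{L}|}(p)}\to t$. Flow conservation holds at every network node because each column is an $s$--$t$ path, and the $A$-edge flows match the definition above by the counting identity.

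The heart of the argument is the routing of each column between adjacent layers $i$ and $i+1$: column $p$ must travel from $z^i_{j_i(p)}$ to $w^{i+1}_{j_{i+1}(p)}$ by means of some lateral $B$-edges in the $z$-row of $L_i$, exactly one $C$-edge $c^i_{\alpha\beta}$ at a hug or boundary position, and lateral $B$-edges in the $w$-row of $L_{i+1}$. I would argue that one can always choose this route so that column $p$ crosses every graph edge at most once and only in the direction dictated by the sign of $x(v)-x(u)$. Granting this, column $p$ contributes $\sum_{e:p\text{ crosses }e}1$ to $\cost_f$, and since each graph edge $e=(u,v)$ is crossed by exactly $|x(u)-x(v)|=\length(e)$ of the $W$ columns, summing yields $\cost_f\le\length(E)$; combined with the previous lemma this forces $\cost_f=\length(E)$, and the induced drawing is $\Gamma$ by construction.

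I expect producing this routing to be the main obstacle, because a $C$-edge is available only at a hug or at the two boundary positions, so a column cannot in general descend straight down without incurring lateral $B$-steps. The key observation is that every graph edge $e$ that forces the column to switch gaps between $L_i$ and $L_{i+1}$ comes paired with an outgoing edge at some $v^i_\alpha$ and an incoming edge at some $v^{i+1}_\beta$ which, together with their ``next to the right with an outgoing/incoming edge'' companions, supply exactly the four edges $e_1,e_2,e_3,e_4$ required by the hug definition at $(\alpha,\beta)$. Choosing the hug innermost to $p$ and paying for each remaining crossing by a single lateral $B$-step (whose cost counts exactly the outgoing, respectively incoming, edges of the straddled $v$-node) charges every geometric crossing to exactly one unit of cost along the route, without cancellation; a case analysis on the sign of $j_{i+1}(p)-j_i(p)$ confirms that this accounting is always exact.
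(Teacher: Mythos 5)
Your overall strategy coincides with the paper's: normalize the coordinates, decompose the flow into unit columns indexed by integer $x$-positions, route the $k$-th unit through the $k$-th column, and account for the cost crossing by crossing. (Your one genuine variation --- proving only $\cost_f\le\length(E)$ and letting the previous lemma force equality --- is a nice economy, and your choice of $W$ with all coordinates $\ge 1$ also cleanly respects the lower bound of one on the boundary edges $a^i_0$ and $a^i_{|L_i|}$.) However, the step you yourself flag as ``the main obstacle'' is precisely the substance of the lemma, and your proposal asserts it rather than proves it. Two things are missing. First, the existence of the needed $C$-edge: a hug at position $(\alpha,\beta)$ requires four edges $e_1,\dots,e_4$ satisfying $\start(e_3)\le\start(e_1)<\start(e_2)\le\start(e_4)$ and $\target(e_1)\le\target(e_3)<\target(e_4)\le\target(e_2)$, where $e_2$ and $e_4$ must moreover emanate from / point to the \emph{next} node to the right having an outgoing (resp.\ incoming) edge. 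Your claim that an edge forcing the column to switch gaps ``comes paired with'' such a quadruple is not automatic; the paper has to pick $e_l$ extremal (largest $x(\start(e))$ among edges with $\start(e)<z$, $\target(e)<w$), pick $v_{h^\prime}$ as the first node right of $\target(e_l)$ adjacent to something right of $\start(e_l)$, and then verify all six inequalities one by one --- and it needs the separate boundary cases where no such $e_l$ or no such $e_r$ exists, in which the route must detour around $c^i_{00}$ or $c^i_{|L_i||L_{i+1}|}$.

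Second, the cost accounting is not a routine sign check. The cost of a lateral step $\overleftarrow{bz}^i_j$ counts \emph{all} outgoing edges of $v^i_j$, and the cost of $c^i_{jk}$ counts all edges $(v^i_p,v^i_q)$ with $p\le j\wedge q\ge k^\prime$ or $p\ge j^\prime\wedge q\le k$, regardless of whether the column geometrically separates their endpoints. To get $\cost(P^i_k)\le|E^i_k|$ you must show that every edge charged by some network edge on the route is in fact separated by column $k$ and is charged by only one network edge of the route; this is what the paper's sub-cases (distinguishing whether $w_{h^\prime-1}$ lies left or right of $w$) establish, using the extremal choice of $e_1,\dots,e_4$ to rule out double charging. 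As it stands, your argument reduces the lemma to exactly the claim that needs proving, so the gap is genuine even though the plan is the right one.
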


\begin{proof}
If necessary, we set $x(v):=x(v) - \min_{v\in V}x(v)$ so that the smallest $x$-coordinate is zero. That gives us an equivalent drawing.
We construct the flow~$f$ as follows: 
Let $\omega$ be the width of $\Gamma$.
We send $\omega$ units of flow from $s$ to $t$, so that the $k$-th unit takes the path $P_k=s \overset{*}\rightarrow w^1_{j_1} \overset{*}\rightarrow w^2_{j_2}\overset{*}\rightarrow\cdots\overset{*}\rightarrow w^{|\mathcal{L}|}_{j_{|\mathcal{L}|}} \overset{*}\rightarrow t$, where $w^i_{j_i}$ is chosen so that $x(v^i_{j_i+1})\geq k$ and $x(v^i_{j_i}) < k$ ($w^i_{j_i}=w^i_0$, if $x(v^i_1) \geq k$ and $w^i_{j_i}=w^i_{|L_i|}$, if $x(v^i_{|L_i|} < k$).
That means we send the $k$-th unit through the $k$-th ``column'' of $\Gamma$.
This is always possible, because of the subpaths $w^i_{j_i} \rightarrow z^i_{j_i} \overset{*}\rightarrow z^i_0 \rightarrow w^{i+1}_0 \overset{*}\rightarrow w^{i+1}_{j_{i+1}}$. 
So for every $v$ there are $x(v)$ units of flow that pass by to the left of $v$, thus giving us correct coordinates for all nodes.

We define $E^i_{k}:=\{ e=(v^i_j,v^{i+1}_l)\in E\mid x(v^i_j) < k \text{ and }x(v^{i+1}_l)\geq k \} \cup \{ e\in~E\mid x(v^i_j) \geq k \text{ and }x(v^{i+1}_l)< k \}$, i.e. all edges that cross over the $k$-th column between $L_i$ and $L_{i+1}$.
We show that there exists a path $P_k$ that produces the same cost as the number of graph edges that cross over the $k$-th column in total, that is $\cost(P_k)=\sum_{i=1}^{|\mathcal{L}|-1}\abs{E^i_k}$. Then we have $\sum_{k=1}^\omega \cost(P_k)=\sum_{k=1}^\omega \sum_{i=1}^{|\mathcal{L}|-1}\abs{E^i_k}=\length(E)$ and we have proven the lemma. 

It suffices to focus on the subpath $P^i_k$ from $z=z^i_{j_i}$ to $w=w^{i+1}_{j_{i+1}}$ between two consecutive layers. Notice that network edges $(s, w^1)$, $(w^i,z^{i})$ and $(z^{|\mathcal{L}|},t)$ do not contribute to the cost of the flow. For better readability we denote the nodes of $L_i$ with $u_j$ and the nodes of $L_{i+1}$ with $v_j$ and we omit the superscripts. If not stated otherwise we use $z_j$ for $z^i_j$ and $w_j$ for $w^{i+1}_j$. We construct $P^\prime=P^i_k$ so that $\cost(P^\prime)=|E^\prime|=|E^i_{k}|$.

\medskip
\noindent
\textbf{Case 1:} There exists no edge $e$ with $\start(e)<z$ and $\target(e)<w$. \\
That means every edge $e$ with $\start(e)<z$ has $\target(e)>w$, and if $\target(e)<w$ then $\start(e)>z$.
Then we set $P^\prime=z \rightarrow z_{j_i -1}\overset{*}\rightarrow z_0 \rightarrow w_0 \rightarrow w_1 \overset{*}\rightarrow w$. 
For every $u_j<z$ with $p$ outgoing edges $P^\prime$ uses exactly one $\overleftarrow{bz}$ with cost $p$. All these edges are in $E^\prime$. For every $v_j<w$ with $q$ incoming edges we use exactly one $\overrightarrow{bw}$ with cost $q$. Again these edges are in $E^\prime$. So $\cost(P^\prime)=|E^\prime|$, since there are no other edges in $E^\prime$.

\medskip
\noindent
\textbf{Case 2:} There exists no edge $e$ with $\start(e)>z$ and $\target(e)>w$.\\
Arguing like in Case 1, we set $P^\prime=z \overset{*}\rightarrow z_{|L_i|} \rightarrow w_{|L_{i+1}|} \overset{*}\rightarrow w$. 
As before the cost of $P^\prime$ equals $|E^\prime|$.

\medskip
\noindent
\textbf{Case 3:} There exists an edge $e_l$ with $\start(e_l)<z$ and $\target(e_l)<w$ and another edge $e_r$ with $\start(e_r)>z$ and $\target(e_r)>w$.\\
Let $e_l$ be the edge with the biggest $x(\start(e))$ of all edges $e$ with $\start(e)<z$ and $\target(e)<w$, and let $e_r$ be the edge with the smallest $x(\start(e))$ of all edges $e$ with $\start(e)>z$ and $\target(e)>w$.

\smallskip
\noindent
\textbf{Case 3.1:} There is at least one node $u^\prime$ with outgoing edges and \\$\start(e_l)<u^\prime < z$.\\
Let $u_g=\start(e_l)$ and $v_{g^\prime}=\target(e_l)$. We know $v_{g^\prime}<w$. Let $v_{h^\prime}$ be the first node to the right of $v_{g^\prime}$ with an edge $e_{r^\prime}=(u_h, v_{h^\prime})$ and $u_h>u_g$. Such a node does exist, since we have $e_r$. 
Notice that $v_{h^\prime}$ might be to the right of $w$. 

Then we have a hug: Set $e_1=e_l$, set $e_2$ to one outgoing edge of $u_{g+1}$ (or the next node to the right of $u_g$, which has an outgoing edge), $e_4=e_{r^\prime}$ and set $e_3$ to one incoming edge of $v_{h^\prime-1}$ (or the next one to the left of $v_{h^\prime}$), see Fig.~\ref{case3}.
Notice that $e_1$ may coincide with $e_3$ and $e_2$ with $e_4$.

We have $\start(e_3) \leq  \start(e_1)$, because we chose $e_1=e_l$ with the biggest $x(\start(e))$ and
 $v_{h^\prime}$ is the first node to the right of $v_{g^\prime}$ with an adjacent node to the right of $u_g$. 
So every node between $v_{g^\prime}=\target(e_1)$ and $v_{h^\prime}$, including $v_{h^\prime-1}=\target(e_3)$, can only have adjacent nodes to the left of $u_g=\start(e_1)$.
It is clear that $\start(e_1) < \start(e_2)$ and $\start(e_2) \leq \start(e_4)$, since $\start(e_4)=u_h>u_g$. 
By choice of $e_1$, $e_3$ and $e_4$  $\target(e_1) \leq  \target(e_3) <  \target(e_4)$ holds.
We know that $\target(e_2)>w$ because there is at least one node between $\start(e_1)$ and $z$ whose outgoing edges have to end to the right of $w$ because of the choice of $e_1$.
If $\target(e_4) > \target(e_2)$ then $e_2$ would have been chosen for $e_{r^\prime}$ and therefore for $e_4$. So $\target(e_4) \leq \target(e_2)$ also holds.
So there exists $c_{g(h^\prime -1)}\in E_N$ and we set $P^\prime=z \overset{*}\rightarrow z_g \rightarrow w_{h^\prime -1} \overset{*}\rightarrow w$.

Now for the cost. A subset of $E^\prime$ are the edges $e$ with $z_g < \start(e) <z$ and $\target(e)>w$, which are covered by the $\overleftarrow{bz}$ of $P^\prime$. 
 
Now we have two options. First, if $w_{h-1}>w$ then 
all edges $e$ with $\start(e)<~z_g$ and $\target(e)>w_{h^\prime -1}$ are covered by $c_{g(h^\prime -1)}$ and
the remaining edges $e$ with $\start(e)<~z_g$ and $w<\target(e)<w_{h^\prime -1}$ are covered by the $\overleftarrow{bw}$ of $P^\prime$.
Edges~$e$ with $\start(e)>z>u_g$ and $\target(e) < w <w_{h^\prime}$ are also covered by $c_{g(h^\prime -1)}$.
There cannot be any edge $e$ with $z<\start(e)$ and $w<\target(e)< w_{h^\prime -1}$ or $z_g<\start(e)<z$ and $\target(e)<w$, which would be crossed over by two different edges of $P^\prime$, due to the choice of edges $e_1$ to $e_4$.

Second, if $w_{h-1}<w$ then 
$c_{g(h^\prime -1)}$ covers all edges $e$ with $\start(e)<z_g$ and $\target(e)>w>w_{h^\prime -1}$ and all edges $e$ with $\start(e)>z>z_g$ and $\target(e) < w_{h^\prime -1}<w$.
Edges $e$ with $\start(e)>z$ and $w_{h^\prime -1} < \target(e) <w$ are covered by the $\overrightarrow{bw}$.
Again there are no edges that are crossed over twice by $P^\prime$ due to the choice of $e_1$ to $e_4$.
And there are no edges in $E^\prime$ that are not covered by some edge of $P^\prime$.

\smallskip
\noindent
\textbf{Case 3.2:} $\start(e_l)$ is the next node to the left of $z$ with outgoing edges, but there is at least one node $u^\prime$ with outgoing edges and $\start(e_r)>u^\prime > z$.\\
This case is analogous to Case 3.1.

\smallskip
\noindent
\textbf{Case 3.3:} $\start(e_l)$ is the next node to the left of $z$ with outgoing edges and $\start(e_r)$ is the next node to the right of $z$ with outgoing edges.\\
Let $e_l=(u_g,v_{g^\prime})$ and $v_{h^\prime}$ be the first node right of $v_{g^\prime}$ with an adjacent node $u_h>u_g$. 
Again we have a hug. Set $e_1=e_l$, $e_2=e_r$, $e_3$ to an incoming edge of~$v_{h^\prime -1}$ (or a lower node, if necessary) and $e_4=(u_h,v_{h^\prime})$. 

With the same arguments as in Case 3.1 we convince ourselves that $e_1$, $e_2$, $e_3$ and $e_4$ are indeed a hug and we have $c_{j_i(h^\prime -1)}$.
We set $P^\prime=z \rightarrow w_{h^\prime -1} \overset{*}\rightarrow w$. As before $\cost(P^\prime)=|E^\prime|$.
\qed
\end{proof}

\begin{theorem}
A minimum cost flow in the network described above solves \\
HCAP\textsubscript{minEL}.
\end{theorem}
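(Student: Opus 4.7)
The plan is to combine the two preceding lemmas in the standard ``primal/dual'' style, where Lemma~1 gives one inequality and Lemma~2 gives the reverse.

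First I would let $f^*$ be a minimum cost flow in the network, and let $\Gamma^*$ denote the drawing induced by $f^*$ via the coordinate formula~(\ref{defx}). By Lemma~1 this induced drawing satisfies $\length(E) \le \cost_{f^*}$, so it is feasible with total horizontal edge length bounded by $\cost_{f^*}$.

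Next I would take $\Gamma_{\mathrm{opt}}$ to be any optimal drawing for HCAP\textsubscript{minEL}, with total horizontal edge length $\length_{\mathrm{opt}}(E)$. By Lemma~2 there is a feasible flow $f'$ in the network that induces $\Gamma_{\mathrm{opt}}$ and satisfies $\cost_{f'} = \length_{\mathrm{opt}}(E)$. Since $f^*$ has minimum cost among feasible flows, $\cost_{f^*} \le \cost_{f'} = \length_{\mathrm{opt}}(E)$. Chaining the two inequalities gives
\[
\length(E) \;\le\; \cost_{f^*} \;\le\; \length_{\mathrm{opt}}(E),
\]
so the drawing $\Gamma^*$ induced by $f^*$ achieves the optimal total horizontal edge length (and indeed all three quantities coincide).

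There is essentially no obstacle here beyond unpacking the two lemmas: the real technical work lives in Lemma~2 (constructing a flow from a given drawing, in particular the hug case analysis), while this final theorem is just the observation that a min cost flow both yields a valid coordinate assignment and cannot undershoot the optimum length. The only thing to double-check is that $\Gamma^*$ is a \emph{feasible} drawing (integer coordinates respecting the ordering), which was already verified in the discussion right after~(\ref{defx}) using the lower bound of one on the edges in $A$.
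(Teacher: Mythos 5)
Your proposal is correct and is exactly the argument the paper intends: the paper's proof of the theorem is simply the citation ``Lemma 1 and Lemma 2,'' and your write-up just makes explicit the standard sandwich $\length(E) \le \cost_{f^*} \le \cost_{f'} = \length_{\mathrm{opt}}(E)$ together with the feasibility of the induced drawing. No substantive difference from the paper's approach.
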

\begin{proof}
Lemma 1 and Lemma 2.
\qed
\end{proof}

\begin{figure}[tb]
\centering\resizebox{0.5\width}{!}{	
\begin{tikzpicture}[yscale=0.8]
\tikzset{
  font={\fontsize{15pt}{12}\selectfont}}
  
	\node[v] at (0,2) (u1){};
	\node[v] at (3,2) (ug){$u_g$};
	\node[v] at (6,2) (u3){};
	\node[v] at (9,2) (u4){};
	\node[v] at (12,2) (uh){$u_h$};
	\node[v] at (15,2) (u6){};
	
	\node[v] at (0,-6) (v1){};
	\node[v] at (3,-6) (vg){$v_{g^\prime}$};
	\node[v] at (6,-6) (v3){};
	\node[v] at (9,-6) (v4){};
	\node[v] at (12,-6) (vh){$v_{h^\prime}$};
	\node[v] at (15,-6) (v6){};
	
	\draw[e] (u1)--(v3);
	\draw[e, black] (u1)--node[pos=0.8,right=2mm]{$e_3$}(v4);
	\draw[e, black] (ug)--node[pos=0.2,right]{$e_1$}(vg);
	\draw[e, black] (u3)--node[pos=0.25,right=3mm]{$e_2$}(v6);
	\draw[e] (u4)--(v1);
	\draw[e, black] (uh)-- node[pos=0.85, left]{$e_4$}(vh);	
	\draw[e] (u6)--(vg);

	\node[n] at (4.5,1.3) (n2) {};
	\node[n] at (7.5,1.3) (n3) {$z$};
	
	\node[n] at (7.5,-5.3) (n8) {$w$};
	\node[n] at (10.5,-5.3) (n9) {};
	
	\draw[a] (n3)--(n2);
	\draw[a] (n2)--(n9);
	\draw[a] (n9)--(n8);
\end{tikzpicture}
}
\caption{Case 3.1. Only relevant network nodes and edges are depicted. Edges that participate in the hug are black.}
\label{case3}
\end{figure}
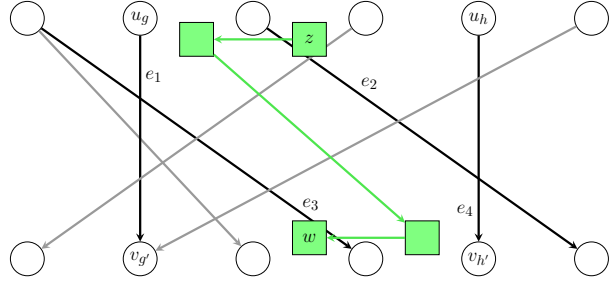

For controlling the maximum width of the drawing we make use of Property~\ref{s<width}, which states that the width of the drawing is at most the flow leaving $s$. We can add an additional node $s^\prime$ and an edge $(s,s^\prime)$ to $N$ and replace all edges of the form $(s, w^1_j)$ with $(s^\prime, w^1_j)$. Now we can limit the maximum width of the drawing by setting the upper bound of $(s,s^\prime)$ to an appropriate value.

Further constraints can be modelled by manipulating the network. By adjusting the lower and upper bounds of edges $a\in A$ we can realize minimum and maximum distances between two neighboring nodes on the same layer.
By removing every $g \in \overleftarrow{E}(e) \cup \overrightarrow{E}(e)$ from the network, we can enforce the edge $e$ to be drawn vertically.

%
%
%
%
 
\section{Experimental Results}

\begin{figure}[tbh]
\centering
\includegraphics[scale=0.9]{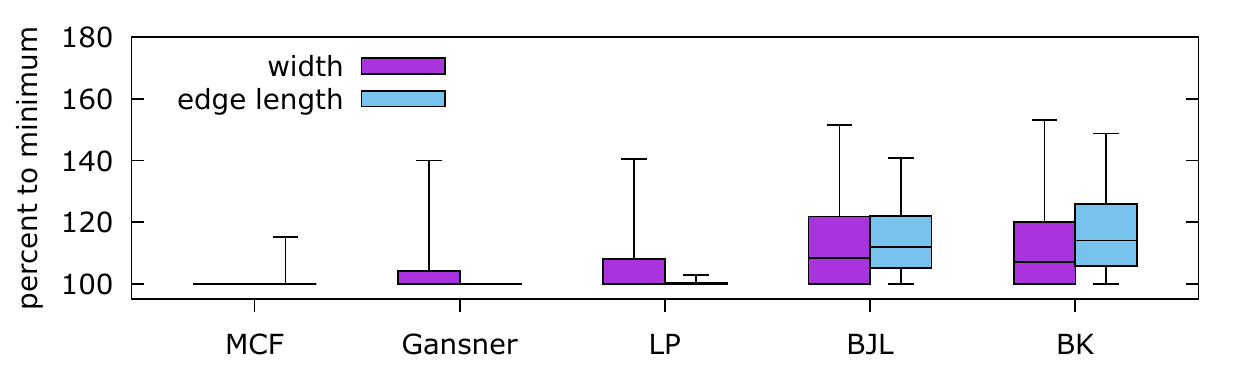}

\caption{Width and total edge length produced by \textsf{MCF}, \textsf{Gansner}, \textsf{LP}, \textsf{BJL} and \textsf{BK} relative to minimum width, resp. edge length.}
\label{widthLength}
\end{figure}

\begin{figure}[tbh]
\centering
\includegraphics[scale=0.9]{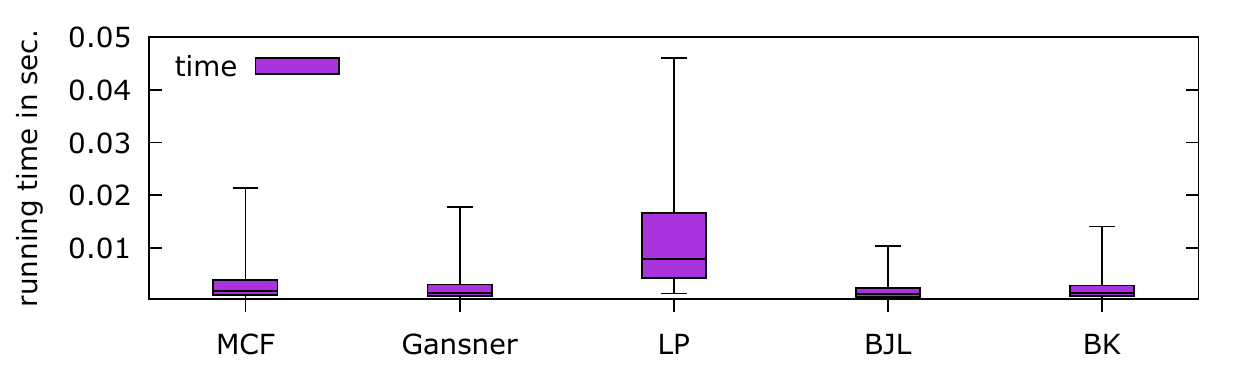}

\caption{Running time for \textsf{MCF}, \textsf{Gansner}, \textsf{LP}, \textsf{BJL} and \textsf{BK}.}
\label{time}
\end{figure}

\begin{figure}[tbh]
\subfloat[]{
\centering
\includegraphics[scale=0.52]{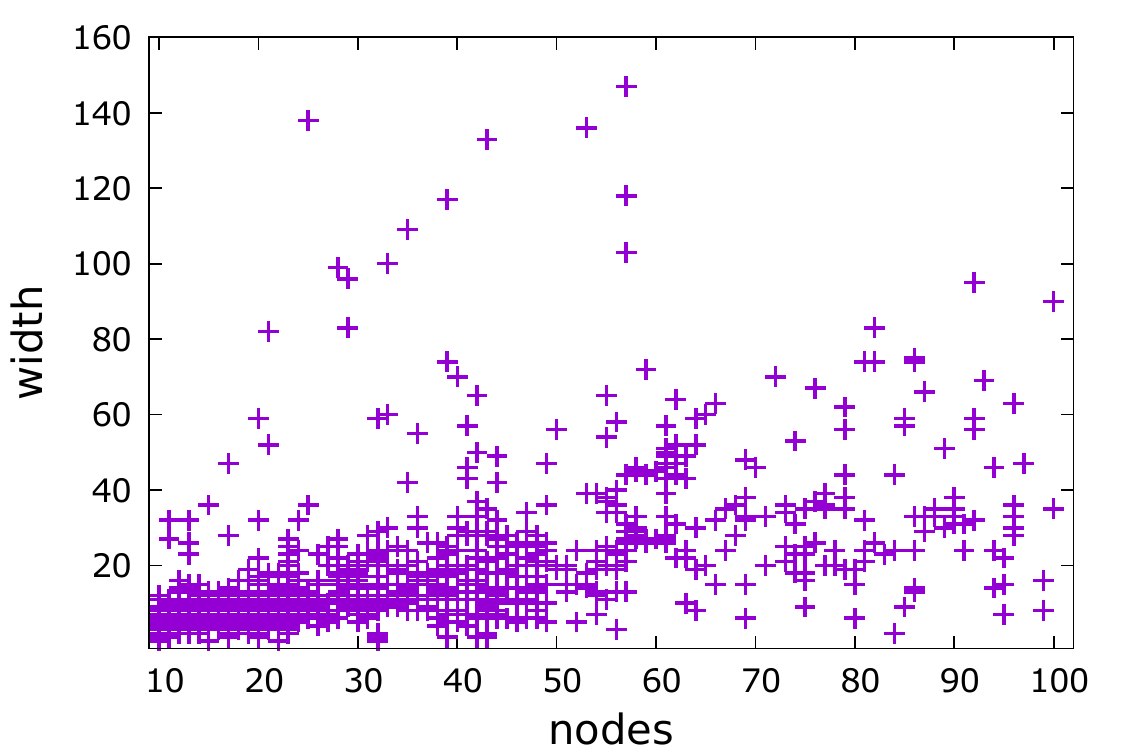}
}
\hfill
\subfloat[]{
\includegraphics[scale=0.52]{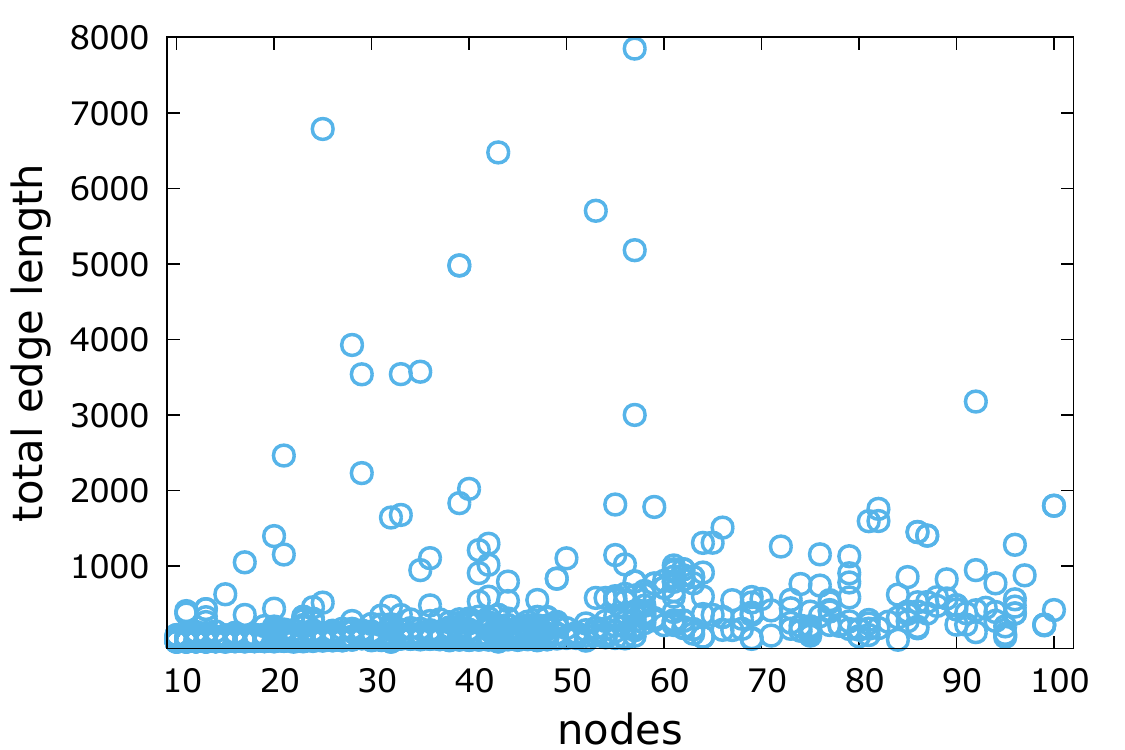}
}
\caption{Absolute values of (a) width and (b) total edge length for \textsf{MCF}.}
\label{abs}
\end{figure}

In our experiment we want to demonstrate that we are able to restrict the width of the drawing without paying too much in terms of total (horizontal) edge length and time.

We implemented the algorithm from Sect.~\ref{network}, which we will call \textsf{MCF} within the Open Graph Drawing Framework~\cite{OGDF} (OGDF) and used the OGDF network simplex software to solve the minimum cost flow problem. We also implemented the approach of Gansner et al.~\cite{Gansner} (\textsf{Gansner}) that also uses the network simplex algorithm. 
Additionally we use three other OGDF methods: an ILP that also takes balancing the nodes between their neighbors into account (\textsf{LP}), the algorithm of Buchheim, J\"unger and Leipert~\cite{BJL} (\textsf{BJL}) and the algorithm of Brandes and K\"opf~\cite{BK} (\textsf{BK}). 
All algorithms
draw inner segments of long edges as vertical lines, since this is generally desirable for good readability. \textsf{MCF} is configured to compute a layout with mimimum edge length with respect to minimum possible width  and \textsf{Gansner} computes coordinates that minimize the total edge length regardless of width.
We used a subset of the AT\&T graphs from \url{www.graphdrawing.org/data.html} consisting of 1277 graphs with 10 to 100 nodes as our test set.

The test was run on an Intel Xeon E5-2640v3 2.6GHz CPU with 128 GB RAM.

Figures~\ref{widthLength}, \ref{time}, and \ref{abs} show the results. The whiskers in Fig.~\ref{widthLength} and \ref{time} cover 95\% of the data and outliers are omitted for better readability. Figure~\ref{abs} shows absolute values for \textsf{MCF} and Figure~\ref{exampleDrawings} displays three example drawings.

In Figure~\ref{widthLength} the resulting total edge length and width of the drawings are depicted relative to the minima that are computed by \textsf{Gansner} and \textsf{MCF}, respectively. We see that \textsf{MCF} still achieves good results in terms of total edge length, even though it has the restriction of meeting the minimum width. The total edge length of drawings computed with \textsf{MCF} is on average 2.2\% over the minimum, while drawings produced with \textsf{Gansner} have on average a width that is 8.9\% over the minimum. In an extreme example with minimum width 1, \textsf{Gansner} results in width 15.

Figure~\ref{time} shows the running time in seconds. 
\textsf{MCF} (4.9 milliseconds on average) is a bit slower than \textsf{Gansner} (3.9 milliseconds on average).
The fastest algorithm on average is \textsf{BJL} with 2.5 milliseconds.

\begin{figure*}[tb]
\subfloat[]{
\centering
\includegraphics[scale=0.35]{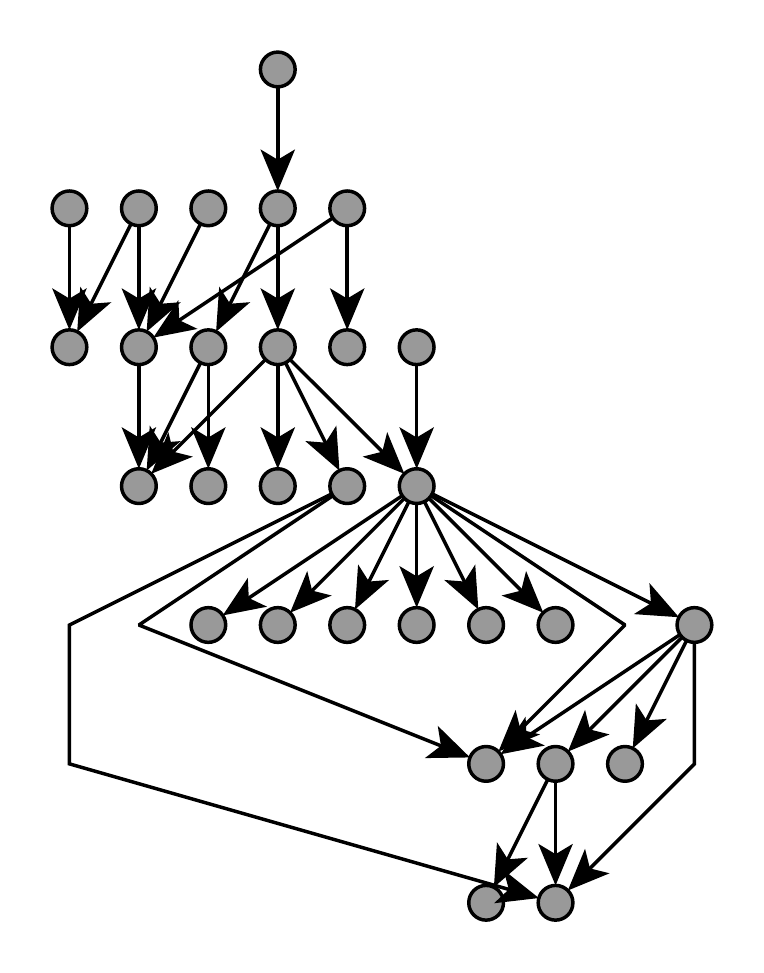}
}
\hfill
\subfloat[]{
\centering
\includegraphics[scale=0.35]{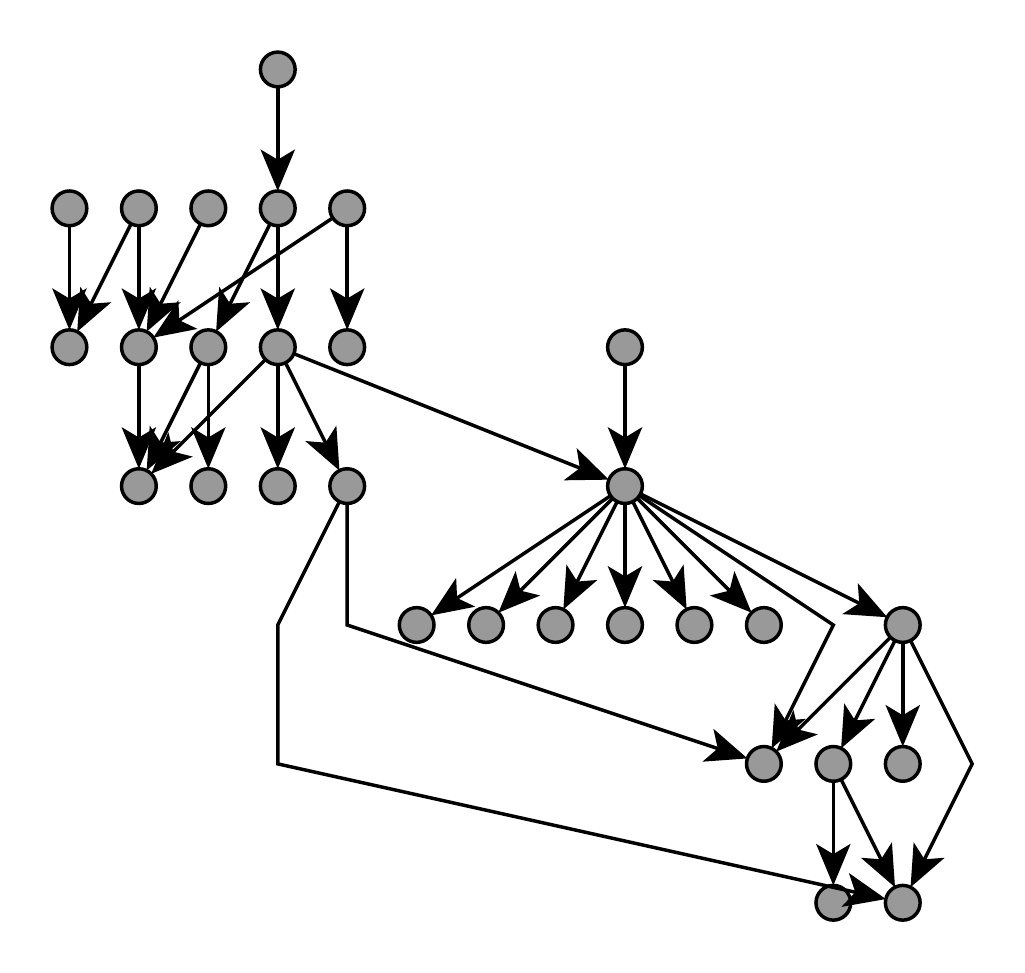}
}
\hfill
\subfloat[]{
\centering
\includegraphics[scale=0.35]{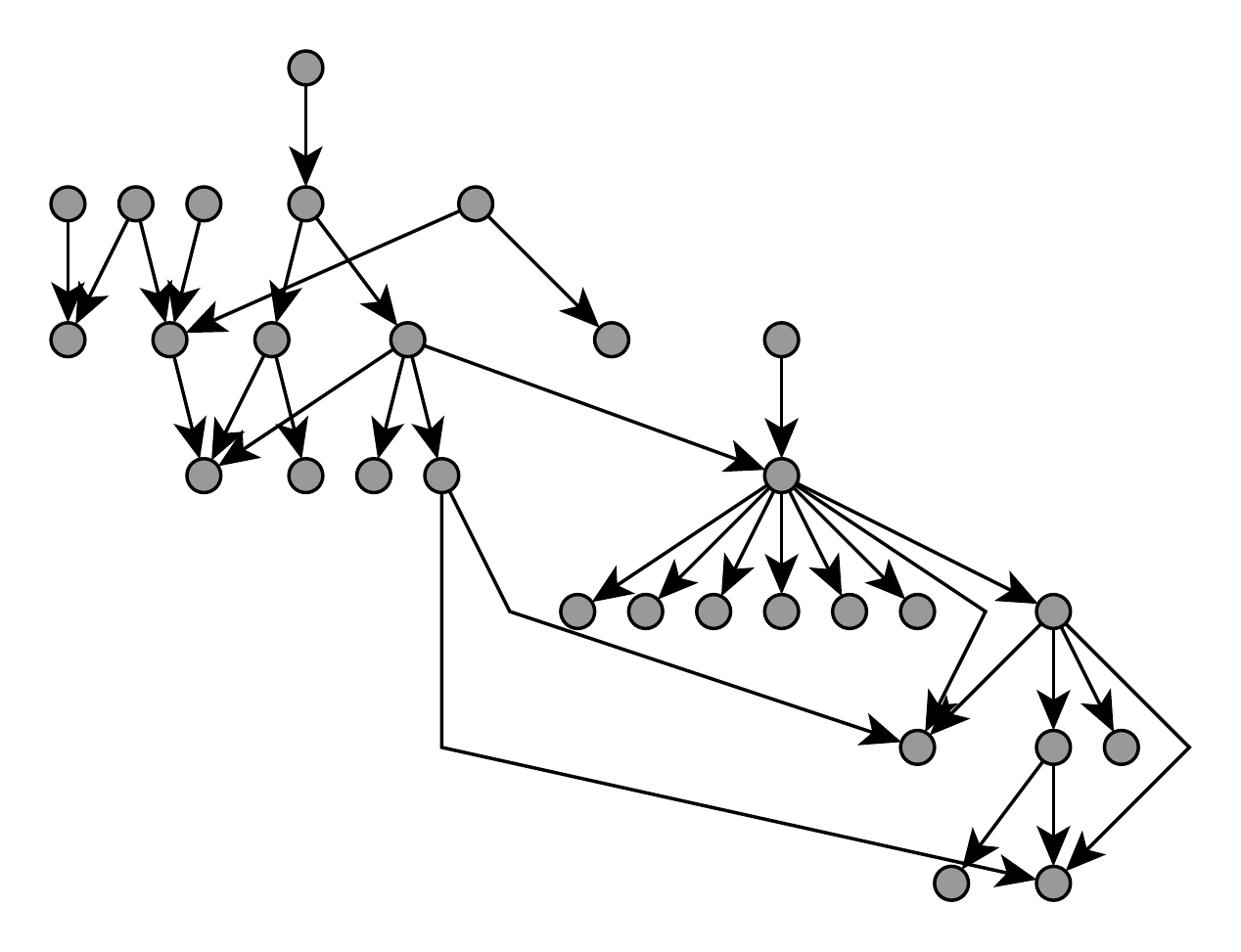}
}
\caption{Example drawings of a graph with 29 nodes and 33 edges. \newline (a) \textsf{MCF}: width: 9, edge length: 58. (b) \textsf{Gansner}: width: 13, edge length: 54. \newline (c) \textsf{BK}: width: 16.5, edge length: 63.5.}
\label{exampleDrawings}
\end{figure*}

\section{Conclusion}
We presented a minimum cost flow formulation for the coordinate assignment problem that minimizes the total edge length with respect to several optional criteria like the maximum width or lower and upper bounds on the distance of neighboring nodes in a layer. 
In our experiments we showed that our approach can compete with state-of-the-art algorithms.

%
%
 \bibliographystyle{splncs04}
 \bibliography{flowCA}

\end{document}